\newtheorem{remark}{Remark} 
\title{\LARGE \bf
Information-Constrained Optimal Control of Distributed Systems with Power Constraints
}
\author{V.~Causevic$^{\dagger}$, P.~Ugo~Abara$^{\dagger}$ and S.~Hirche$ $
\thanks{This project has received funding from European Union's Horizon 2020
Framework Programme for Research and Innovation under grant agreement
No 674875 and the German Research Foundation (DFG) within the Priority Program
SPP 1914 ''Cyber-Physical Networking".}
\thanks{$^{\dagger}$ Both authors contributed equally to this work.}%
\thanks{$^{1}$ V.~Causevic, P.~Ugo Abara and S. Hirche are with the Chair of Information-oriented
Control, Technical University of Munich, Germany; {\tt\small http://www.itr.ei.tum.de, \{vedad.causevic, ugoabara, hirche\}@tum.de}}%
}
\newcommand\numberthis{\addtocounter{equation}{1}\tag{\theequation}}
\newtheorem{theorem}{Theorem} 
\newtheorem{lemma}{Lemma} 
\newtheorem{proposition}{Proposition} 
\newtheorem{corollary}{Corollary} 
\newtheorem{definition}{Definition} 
\def\BState{\State\hskip-\ALG@thistlm}
\begin{document}

\maketitle
\thispagestyle{empty}
\pagestyle{empty}

\begin{abstract}

In this paper we address the problem of information-constrained optimal control for an interconnected system subject to one-step communication delays and power constraints. The goal is to minimize a finite-horizon quadratic cost by optimally choosing the control inputs for the subsystems, accounting for power constraints in the overall system and different information available at the decision makers. To this purpose, due to the quadratic nature of the power constraints, the LQG problem is reformulated as a linear problem in the covariance of state-input aggregated vector. The zero-duality gap allows us to equivalently consider the dual problem, and decompose it into several sub-problems according to the information structure present in the system. Finally, the optimal control inputs are found in a form that allows for offline computation of the control gains.   

\end{abstract}

\section{INTRODUCTION}

Technological advances in computation and communication, and societal needs have revived
the research interest in control of interconnected systems
 \cite{networkflows}. Some examples include smart grids, communication networks, and transportation systems. Traditionally, arguments in favor of distributed control (compared to centralized) are geographically distributed sensors, limited local computational power at the plant side, robustness against single-node failure and information privacy.
\\
\indent
In general, the design of distributed control is difficult because it imposes information constraints on individual decision makers. Such constraints arise due to either partial information exchange between decision makers or communication delay. In the problem we address herein, decision makers are able to communicate the full information they receive - either due to own measurements or from other decision makers, however, with delay. In other words, information constraints are due to communication delays between decision makers. The information constraints, sometimes referred to as information structure, play a key role in determining the optimal control and decide on its computational tractability. Indeed, in \cite{witsenhausen1968} a linear quadratic Gaussian team problem is constructed with a non-classical information pattern and it is shown that a linear controller is not necessarily optimal. This problem is addressed in \cite{ho-chi1972} where it is shown that the so-called partially nested information structure guarantees existence of optimal control laws that are linear in the associated information. Finally, a strong result characterizing the class of all information-constrained problems which may be cast as a convex program is given in \cite{rotkowitz2006tac}. \\
Inspiration for our approach is given by the work in \cite{nayyar2013tac} which suggests that the information hierarchy existing between the decision makers can be exploited to obtain the optimal solution. First explicit solutions to linear quadratic Gaussian team problems that adopt similar approach are given in \cite{lamperski2012cdc}. The authors however, consider a typical unconstrained linear quadratic team problem. But in reality, e.g. actuation capabilities are limited and thus must be accounted for in the design procedure. \\
\indent The main contribution of this paper is a method to compute optimal control laws, for a power-constrained system with  given information structure. We assume the latter to be induced by a one-step communication delays between the decision makers. To this end, the problem is reformulated in its dual Lagrangian form, where the covariance of the state-input aggregated vector is defined as decision variable. The information structure is then exploited to split the optimization problem into simpler sub-problems that have alike structure. Indeed, in-network control \cite{InNetwork} is seen as the decomposition of a complex task into smaller sub-tasks resulting in computationally inexpensive local control actions. From an application point of view, the goal is to implement and analyze the developed approach within a network infrastructure, exploiting the possibility of existing (but limited) in-network processing, in order to improve control performance.\\
\indent The remainder of the paper is outlined as follows. We start with problem setup in \ref{sec: problem statement}. The method to decouple problem into several sub-problems via covariance decomposition is presented in section \ref{sec:info dec}. In section \ref{sec:dual} we provide structural characterization of the solution to the problem and finally conclusions are given in \ref{sec:conclude}.
\section{Problem setting}
\label{sec: problem statement}
Consider a large-scale interconnected dynamical system composed of $N$ physically-coupled linear time-invariant (LTI) subsystems. Formally, the physical interconnections are described through a graph $\mathcal{G} = \left(\mathcal{V}, \mathcal{E}\right)$. We will refer to it as the physical interconnection graph. Each node $i \in \mathcal{V}$ corresponds to one of the subsystems $i\in\{1, \ldots, N\}$. An edge $(j,i) \in \mathcal{E}$ if dynamics of node $i$ is directly affected by node $j$. We assume that $\mathcal{G} $ is connected and undirected, i.e., $(i,j) \in \mathcal{E}$ if and only if $(j,i) \in \mathcal{E}$. The set of direct neighbors of decision maker $i$ is defined as $\mathcal{N}_i = \{j \,\vert (j,i) \in \mathcal{E}\}$. The length of the shortest path between nodes $i$ and $j$ will be denoted by $d_{ij}$. Clearly, if $j \in \mathcal{N}_i$ then $d_{ij} = 1$. The dynamics of the $i$-th subsystem is given by a first order stochastic difference equation
\begin{align*}
	&x_i (k+1)= A_i x_i (k)+ B_i u_i (k) + \sum_{j \in \mathcal{N}_{i}} A_{ij} x_j (k)+ w_i (k),
	\label{eq: main NCS}
	\numberthis
\end{align*}
where $A_i \in \mathbb{R}^{n_i \times n_i}$,  $A_{ij} \in \mathbb{R}^{n_i \times n_j}$,  $B_i \in \mathbb{R}^{n_i \times m_i}$, $ x_i (k) \in \mathbb{R}^{n_i}$ is the state and $u_i (k) \in \mathbb{R}^{m_i}$ is the control signal of the $i$-th subsystem. The noise process $w_i (k) \in \mathbb{R}^{n_i}$ is zero-mean i.i.d. Gaussian noise with covariance matrix $\Sigma_{w} $. The initial state $x_i (0)$ is a random variable with zero-mean and finite covariance $\Sigma_{x}  $. Moreover, $x_i (0)$  and $w_i (k)$ are assumed to be pair-wise independent at each time instant $k$ and every $i$. For a more compact notation, equation \eqref{eq: main NCS} can be rewritten as
\begin{equation}
x (k+1) = A x(k) + B u(k) + w(k)
    \label{eq: global system}
\end{equation}
where the stacked vectors are $x (k) = ({x_1 ^\top (k)}, \ldots , {x_N ^\top (k)})^\top \in \mathbb{R}^n$, $ w(k) = ({w_1 ^\top (k)}, \ldots,  {w_N ^\top (k)})^\top \in \mathbb{R}^n$, $ u(k) = ({u_1 ^\top (k)}, \ldots,  {u_N ^\top (k)})^\top \in \mathbb{R}^m$, $n=\sum_{i=1}^{N} n_i$ and $m=\sum_{i=1}^{N} m_i$. The admissible control policies at time instant $k$ are measurable functions of the information available to each decision maker $i$ (sometimes also referred to as player $i$)
\begin{equation}
    u_i (k) = \gamma_k^i(\mathcal{I}_k^i)
    \label{eq:gama}
\end{equation}
where $\mathcal{I}_k^i, \ k=0,\ldots,T-1,$ is defined as
\begin{align*}
    &\mathcal{I}_k^i = \{\mathcal{I}_{k-1}^i, x^i_{k}, u_{k-1}^i\} \underset{j \in \mathcal{N}_i}\bigcup \{\mathcal{I}^j_{{k-1}}\}, \quad k>0,  
    \numberthis\label{eq:information set}
\end{align*}
and $\mathcal{I}_0^i=\lbrace x_0 ^i \rbrace$. In other words, the information set of each decision maker $i$ is updated at time instant $k$ by the current state and the one-step delayed information from the direct neighbors $\mathcal{N}_i$. The objective is to minimize the following global control cost
\begin{align*}
    \numberthis \label{eq: quadratic cost 1}
    J_{\mathcal{C}} = {\rm E}\left[
        \sum_{k=0}^{T-1} {
            \begin{bmatrix} 
                x(k) \\ 
                u(k) 
            \end{bmatrix}}^\top Q   
            \begin{bmatrix} 
                x(k) \\ 
                u(k) 
            \end{bmatrix} + x(T)^\top Q_{T} x(T) 
        \right]
\end{align*}
where the matrix $Q$ is partitioned according to the vector $z (k) $ = $\left[{x(k)}^\top {u(k)}^\top\right]^\top$  i.e.
\begin{equation*}
\numberthis
\label{eq:matrixQ}
    Q= \begin{bmatrix}Q_{xx} & Q_{xu} \\ Q_{ux} & Q_{uu} \end{bmatrix}.
\end{equation*}
The matrix $Q_{uu}$ is assumed to be positive-definite matrix, while $Q$ and $Q_T$ are assumed to be semi-definite positive. We also assume controllability of pair (A,B) as well as detectability of $(Q^{\frac{1}{2}},A)$. Moreover, it is assumed that each decision maker knows the parameters of the overall system. \\
The cost \eqref{eq: quadratic cost 1} is to be minimized under power constraints, which are defined as
\begin{align*}
    \numberthis
    & {\rm E}\left[ {z(k)}^\top W_i \, z (k) \right] \leq p_k^i, \quad \forall i= 1, \ldots, M
    \label{eq: power constraints}
\end{align*}
where $W_i \in \mathbb{R}^{(n+m) \times (n+m)}$, $i=1,\ldots,M$, is a positive semi-definite weighting matrix. By appropriate choice of $W_i$, the set of constraints in \eqref{eq: power constraints}  captures either constraints present in the power of the overall system, or those related to the individual subystems. Ultimately, the problem is formally stated as
\begin{align*}
    \numberthis \label{eq: problem1}
    \min_{\gamma_{0:T-1}} \qquad & {\rm E}\left[
            \sum_{k=0}^{T-1} {
                \begin{bmatrix} 
                    x(k) \\ 
                    u(k) 
                \end{bmatrix}}^\top Q  
                \begin{bmatrix} 
                    x(k) \\ 
                    u(k) 
                \end{bmatrix} + x(T)^\top Q_{T} \, x(T) 
            \right]\\
    \text{s.t.} \qquad & \eqref{eq: global system},  \eqref{eq:gama}, \eqref{eq: power constraints}
\end{align*}
where $\gamma_k = [ \gamma^{1}_{k},\ldots,\gamma^{N}_{k}]$ is composed of all players control policies. Before stating the main result of this section we define the notion of partial nestedness \cite{PN}.
\begin{definition}
The information structure $\mathcal{I}_{k} = \left\lbrace  \mathcal{I}_{k} ^1, \ldots, \mathcal{I}_{k} ^N \right\rbrace$ is partially nested if, for every admissible policy \eqref{eq:gama}, whenever $u_i(\tau)$ affects $\mathcal{I}_{k} ^j$, then $\mathcal{I}_{\tau} ^i \subset \mathcal{I}_{k} ^j$.
\end{definition}
\begin{lemma}[Partial nestedness]
The information structure defined by \eqref{eq:information set} is partially nested.
\label{lemma: partially nested}
\end{lemma}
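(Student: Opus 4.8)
The plan is to exploit the structural parallel between the two recursions in the model: the information recursion \eqref{eq:information set} copies a node's entire information set to each physical neighbour with a one-step delay, while the dynamics \eqref{eq: global system} let a control action influence a neighbouring state only with a one-step delay as well, plus one further step for the actuation itself. Hence information about node $i$ reaches node $j$ no later than (in fact one step sooner than) the effect of any control $u_i(\tau)$ does, which is exactly what partial nestedness demands. Since the definition quantifies over all admissible policies, it suffices to track the policy-independent structural dependence. Concretely, I would establish three facts and then combine them.

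\emph{Fact 1 (time monotonicity).} Directly from \eqref{eq:information set}, $\mathcal{I}_{k-1}^i \subseteq \mathcal{I}_k^i$ for all $i$ and all $k \ge 1$, hence $\mathcal{I}_\tau^i \subseteq \mathcal{I}_k^i$ whenever $0 \le \tau \le k$. \emph{Fact 2 (information spread).} For all nodes $i,j$ and all $k \ge d_{ij}$ one has $\mathcal{I}_{k-d_{ij}}^i \subseteq \mathcal{I}_k^j$. I would prove this by induction on $d_{ij}$: the case $d_{ij}=0$ is trivial, and for $d_{ij}\ge 1$ pick a neighbour $\ell \in \mathcal{N}_j$ lying on a shortest $i$--$j$ path, so that $d_{i\ell}=d_{ij}-1$; then \eqref{eq:information set} gives $\mathcal{I}_k^j \supseteq \mathcal{I}_{k-1}^\ell$, and the induction hypothesis gives $\mathcal{I}_{k-1}^\ell \supseteq \mathcal{I}_{(k-1)-d_{i\ell}}^i = \mathcal{I}_{k-d_{ij}}^i$.

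\emph{Fact 3 (influence spread).} If $u_i(\tau)$ affects $\mathcal{I}_k^j$, then $k \ge \tau + 1 + d_{ij}$. Treating the control inputs as free variables, unrolling \eqref{eq: global system} shows that $x_\ell(s)$ depends on $u_i(\tau)$ only through the $(\ell,i)$ block of $A^{\,s-\tau-1}B$, and the block sparsity of powers of $A$ is governed by $\mathcal{G}$; since the diagonal blocks $A_i$ act as self-loops, this block can be nonzero only when $s-\tau-1 \ge d_{i\ell}$. On the other hand, unrolling \eqref{eq:information set} together with Fact 2 shows that every state variable appearing in $\mathcal{I}_k^j$ is some $x_\ell(s)$ with $s \le k - d_{j\ell}$, and every control variable appearing in $\mathcal{I}_k^j$ is some $u_\ell(s)$ with $s \le k - d_{j\ell} - 1$. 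Thus if $u_i(\tau)$ affects $\mathcal{I}_k^j$, then either $u_i(\tau)$ is literally one of those controls (so $\tau \le k - d_{ij} - 1$), or $u_i(\tau)$ influences some state $x_\ell(s)\in\mathcal{I}_k^j$, which forces $\tau + 1 + d_{i\ell} \le s \le k - d_{j\ell}$ and hence, by the triangle inequality $d_{i\ell}+d_{j\ell}\ge d_{ij}$, again $\tau \le k - d_{ij} - 1$.

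Combining the three facts: if $u_i(\tau)$ affects $\mathcal{I}_k^j$, then Fact 3 gives $\tau \le k - 1 - d_{ij}$, so $k \ge d_{ij}$ and $\tau \le k - d_{ij}$; applying Fact 1 and then Fact 2, $\mathcal{I}_\tau^i \subseteq \mathcal{I}_{k-d_{ij}}^i \subseteq \mathcal{I}_k^j$, which is precisely the partial-nestedness condition. I expect the main obstacle to be the bookkeeping in Fact 3 --- in particular the (short but careful) induction on $k$ verifying that the variables carried into $\mathcal{I}_k^j$ by the delayed neighbour sets never run ahead of the one-hop-per-step bounds $s\le k-d_{j\ell}$ and $s\le k-d_{j\ell}-1$, so that the triangle inequality applies as above; everything else is routine.
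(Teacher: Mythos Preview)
Your proof is correct and follows essentially the same route as the paper: both arguments reduce partial nestedness to the triangle inequality $d_{ni}\le d_{nj}+d_{ji}$ for shortest-path distances in the (undirected) interconnection graph. The paper takes a shortcut by writing the information sets in closed form, $\mathcal{I}_k^i=\bigcup_n\{x_n(0{:}k-d_{ni})\}$, and then comparing indices directly, whereas your Facts~1--3 derive the same content inductively and are more explicit about the indirect-influence case (your Fact~3), which the paper handles only tacitly by considering the worst-case $\tau=k-d_{ji}-1$.
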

\begin{proof}
    Let $d_{ji}$ be the length of shortest path  $j \rightarrow i$ in the physical interconnection graph. Considering \eqref{eq:information set}, the information set $\mathcal{I}^i_k$ is influenced by measurement $x_j (k-d_{ji})$, or equivalently by $u_j (k-d_{ji}-1)$. Thus, to check if information structure \eqref{eq:information set} is partially nested, one should verify the condition: $\mathcal{I}_{k-d_{ji}-1}^j \subset \mathcal{I}_k^i$. Recalling the assumption that graph $\mathcal{G}$ is connected and undirected, the information sets of decision makers $i$ and $j$ are explicitly written as
    \begin{align*}
    \mathcal{I}_k^i= &\underset{n = 1,\ldots, N}\bigcup \left\lbrace x_n (0:k-d_{ni}) \right\rbrace, \\
    \mathcal{I}_{k-d_{ji}-1}^j =& \underset{n = 1,\ldots, N}\bigcup \left\lbrace x_n (0:k-d_{nj}-d_{ji}-1)\right\rbrace,
      \end{align*}    
    which reduces the partial nestedness condition to: $d_{nj}+d_{ji}+1 \geq d_{ni} $. Since $d_{ni}$ is the length of the shortest path between nodes $n$ and $i$  in $\mathcal{G}$, one can write: $d_{ni} \leq d_{nj} + d_{ji} < d_{nj}+d_{ji}+1 $ which concludes the proof.
\end{proof}
\indent \\
\indent Taking into consideration that problem \eqref{eq: problem1} is subject to power constraints, it is convenient to reformulate it in terms of covariance as the new decision variable
\begin{equation*}
    V(k)= {\rm E} \left[ z(k) z(k)^\top \right] ={\rm E} \left[ {\begin{bmatrix} 
                x(k) \\ 
                u(k) 
            \end{bmatrix}} \begin{bmatrix} 
                                        x(k) \\ 
                                        u(k) 
                                    \end{bmatrix}^\top\right]
\end{equation*}
With the additional constraint given by \eqref{eq:gama}, problem \eqref{eq: problem1} is posed as a covariance selection problem
\begin{align*}
    \numberthis\label{eq: problem2}
        \min_{V(0:T-1)\succeq 0} \quad &  tr (Q_{T} V_{xx} (T)) +\sum_{k=0}^{T-1} tr (Q V(k))   \\
        \text{s.t.} \quad & F V(0) F^{\top} = \Sigma_x \\ 
        \quad &  \begin{bmatrix} A & B  \end{bmatrix} V(k) \begin{bmatrix} A & B  \end{bmatrix}^\top + \Sigma_w  = F V(k+1) F^\top \\
        & tr(W_i V(k)) \le p_k ^{i} , \quad \forall i= 1, \ldots, M 
\end{align*}
where $F= \begin{bmatrix} I & 0 \end{bmatrix}$. Part of the result above is derived from the fact that, for a generic matrix $\Theta$ the following identity holds
\begin{align*}
    {\rm E}\left[
        z(k)^\top \Theta z(k)
        \right] = tr\left( \Theta V(k) \right).
\end{align*}
Additionally, rewriting the system dynamics equation \eqref{eq: global system} in terms of a covariance variable $V$  
\begin{align*}
	& F {V}(k+1)F^\top = V_{xx} (k+1)={\rm E}\left[ x(k+1) x(k+1)^\top \right] \\
	& = \begin{bmatrix} A & B  \end{bmatrix} {V}(k)\begin{bmatrix} A & B  \end{bmatrix}^\top + \Sigma_w  
\end{align*}
and translating the initial condition ${\rm E}\left[ x(0) x(0)^\top \right]  =  \Sigma_x $ into  
    \begin{align*}
	V_{xx} (0)&= {\rm E}\left[ x(0) x(0)^\top \right]  = F {V}(0)F^\top = \Sigma_x ,
\end{align*}
the form in \eqref{eq: problem2} is obtained.
\section{Information decomposition}
\label{sec:info dec}
\subsection{Covariance Decomposition}
\label{par: covariance decompostion}
For the sake of simplicity of derivation we demonstrate the method on a two-player system. Considering the state equation \eqref{eq: global system}, each decision maker at each time instant $k$ is able to compute the estimate of the state $x$ based on the common information $\mathcal{I}_k^0$ the two players have at time instant $k$, i.e.
\begin{align*}
    \numberthis \label{eq:common information set}
    \mathcal{I}_k^0  = \mathcal{I}_k^1 \cap \mathcal{I}_k^2 = \{x({0:k-1}), u({0:k-1})\},
\end{align*}
later referred to as the coordinator's information set. The estimate is given by
\begin{align*}
    \numberthis \label{eq:estimator}
    & \hat{x} (k)  ={\rm E}\left[x(k) |\mathcal{I}_k^0 \right] = A x (k-1) + B u (k-1),  
\end{align*}
since ${\rm E}\left[w(k-1) |\mathcal{I}_k^0 \right] = 0$. Locally, after measuring its own state $x_i$ each decision maker can compute the local noise value at the previous time step as
\begin{align*}
\numberthis
& \omega_{i} (k)  = w_{i} (k-1) = x_i (k) - M_i^\top\hat{x} (k)
\label{eq:noise1}
\end{align*}
where $M_1^\top = \begin{bmatrix} I & 0  \end{bmatrix}^\top$, $M_2^\top = \begin{bmatrix} 0 & I  \end{bmatrix}^\top$.
    
The quantities $\hat{x} , \omega_{1}, \omega_{2}$ form a pair-wise independent components of state.  Due to linearity of the state decomposition given by \eqref{eq:estimator}, \eqref{eq:noise1} and partial nestedness of the information structure \eqref{eq:information set} one can represent the optimal control input in the form
\begin{align*}
    \numberthis\label{eq:input decomposition}
    & u(k) = \hat{\phi} (k)+\begin{bmatrix} \phi_1 (k) \\  \phi_2 (k)  \end{bmatrix}
    \end{align*}
where $\hat{\phi} (k) = -{L}_0 (k) \hat{x} (k)$, $\phi_1 (k) = - L_1 (k) \omega_1 (k)$ and $\phi_2 (k) = - L_2 (k) \omega_2 (k)$, for some gains $L_0, L_1, L_2$.
Aiming for the decomposition of problem \eqref{eq: problem2}, we define a vector $\bar {z}$ of state components $\hat{x},\omega_{1},\omega_{2}$ and input components $\hat{\phi},\phi_{1},\phi_{2}$  
\begin{equation*}
\numberthis \label{eq:zbar}
    \bar {z}(k) = \begin{bmatrix} 
                \hat {x}(k)^\top & \hat {\phi}(k)^\top \vert \omega_1(k)^\top & \phi_1(k)^\top \vert \omega_2(k) ^\top & \phi_2(k) ^\top
                \end{bmatrix} ^\top
\end{equation*}
whose blocks are independent.
{Additionally, denoting the state decomposition  \eqref{eq:estimator} - \eqref{eq:noise1} and  input decomposition in \eqref{eq:input decomposition} as
\begin{align*}
(x^{0} (k) ,x^{1} (k) , x^{2} (k))=(\hat{x}(k),\omega_1 (k), \omega_2 (k)),\\
 (u^{0} (k) ,u^{1} (k) , u^{2} (k))=(\hat{\phi}(k),\phi_1 (k), \phi_2 (k)).
\end{align*} the covariance matrix of $\bar{z}(k)$ 
is given by
\begin{equation}
    \label{eq: augmented state variance}
    \bar{V}(k) = {\rm E} \left[\bar{z}(k)\bar{z}(k)^\top\right] = \left[
    \begin{array}{ccc}
    {V}^0(k) & 0 & 0 \\
    0 & V^1(k) & 0 \\ 
    0 & 0 & V^2(k)
    \end{array}\right]
\end{equation}
where covariance matrices $V^l, l \in \{0,1,2\},$ of the individual blocks are
\begin{align*}
    & {V}^l (k) = {\rm E}\left[ \begin{bmatrix} {x}^l(k) \\ u^l (k) \end{bmatrix}\begin{bmatrix} x(k) \\ u(k) \end{bmatrix}^\top  \right] = \begin{bmatrix} V_{{x}^l{x}^l} (k)  &  V_{x^lu^l} (k)\\ V_{u^lx^l} (k) & V_{u^lu^l} (k) \end{bmatrix}.
\end{align*}
The sparsity of $\bar{V}$ is due to block-independency of the vector $\bar{z}$ and due to presence of zero-mean Gaussian noise.\\
Finally, recalling \eqref{eq: main NCS}, for the sake of compactness, $A$ and $B$ are partitioned  as
\begin{align*}
  A  = \begin{bmatrix} A_1 \vert A_2  \end{bmatrix}, \quad &B = \begin{bmatrix} B_1 \vert  B_2 \end{bmatrix}. 
 \end{align*}
where $A_1 \in \mathbb{R}^{n \times n_1}$, $A_2 \in \mathbb{R}^{n \times n_2}$, $B_1 \in \mathbb{R}^{n \times m_1}$ and $B_2 \in \mathbb{R}^{n \times m_2}$. Similarly, referring to \eqref{eq:matrixQ}, matrix $Q$ is partitioned as 
\begin{align*}
    Q = [Q_{x_1} \vert Q_{x_2}  \vert Q_{u_1} \vert Q_{u_2}]
    \end{align*}
where $Q_{x_1} \in \mathbb{R}^{(n + m)\times n_1}$, $Q_{x_2} \in \mathbb{R}^{(n + m)\times n_2}$, $Q_{u_1} \in \mathbb{R}^{(n + m)\times m_1}$, and $Q_{u_2} \in \mathbb{R}^{(n + m)\times m_2}$. Furthermore, we define the following two matrices
\begin{align*} 
    Q^1 = [Q_{x_1} \vert Q_{u_1} ],\quad Q^2 = [Q_{x_2} \vert Q_{u_2} ].
\end{align*}
}
\subsection{Equivalent Problem Formulation}
In order to rewrite the constraints appearing in equation \eqref{eq: problem2} as a function of $\bar{V}$, vectors $x(k)$, $u(k)$, $z(k)$ are obtained pre-multiplying the new variable $\bar {z}(k)$ according to
\begin{align*}
    \numberthis\label{eq: extraction of state}
    x(k) = C_x \bar{z}(k), &\qquad u(k) = C_u\bar{z}(k), & \quad z(k) = C \bar{z}(k).
\end{align*}
where
\begin{align*}
    C = \begin{bmatrix} C_x\\ C_u \end{bmatrix} = \left[
    \begin{array}{c|c|c}
         \begin{aligned} I \quad 0 \end{aligned} & \begin{aligned} I \quad 0 \\ 0 \quad 0 \end{aligned} &  \begin{aligned} 0 \quad 0 \\ I \quad 0    \end{aligned}\\
         \hline
         \begin{aligned} 0 \quad I \end{aligned} & \begin{aligned} 0 \quad I \\ 0 \quad 0 \end{aligned} &  \begin{aligned} 0 \quad 0 \\ 0 \quad I    \end{aligned}
    \end{array}\right].    
\end{align*}
The evolution of the original state $x(k)$ expressed as a function of $\bar{z}(k)$ is now
\begin{equation}
    x(k+1) = \begin{bmatrix} A & B  \end{bmatrix} C \,\bar{z}(k) + w(k).
    \label{eq: fake evolution}
\end{equation}
Combining the expressions in equations \eqref{eq: augmented state variance}, \eqref{eq: extraction of state} and \eqref{eq: fake evolution} the variance of the state $x$ can be written as
\begin{align*}
	\numberthis \label{eq: variance xx}
	V_{xx} (k)&= {\rm E}\left[ x(k) x(k)^\top \right] \\
	& = {\rm E}\left[ (C_x \bar{z}(k))(C_x \bar{z}(k))^\top \right] = C_x \bar{V}(k)C_x^\top
\end{align*}
In the same way the variance of input $u(k)$ equals
\begin{align*}
	\numberthis \label{eq: variance uu}
	V_{uu} (k)&= {\rm E}\left[ u(k) u(k)^\top \right]=  C_u \bar{V}(k)C_u^\top
\end{align*}
Finally, from \eqref{eq: global system} and \eqref{eq: variance xx}, the evolution of the system's state imposes the following recursive covariance equation
\begin{align*}
	\numberthis\label{eq: variance deco}
	& C_x \bar{V}(k+1)C_x^\top = V_{xx} (k+1)={\rm E}\left[ x(k+1) x(k+1)^\top \right] \\
	& = \begin{bmatrix} A & B  \end{bmatrix} C \, {\rm E} \left[\bar{z}(k)\bar{z}(k)^\top\right] C^\top\begin{bmatrix} A & B  \end{bmatrix}^\top + {\rm E} \left[{w}(k)w(k)^\top\right] \\
	& = \begin{bmatrix} A & B  \end{bmatrix} C \, \bar{V}(k)C^\top\begin{bmatrix} A & B  \end{bmatrix}^\top + \Sigma_w. \end{align*}
Similarly from the assumption on the state initial condition, the equivalent condition for the covariance is written as
\begin{align*}
	\numberthis \label{eq: variance xx initial condition}
	V_{xx} (0)&= {\rm E}\left[ x(0) x(0)^\top \right] = C_x \bar{V}(0)C_x^\top = \Sigma_x.
\end{align*}
We then have the following proposition which is the main achievement of this subsection.
\begin{proposition}
\label{eq:proposition}
Let $\bar{V}$ be the covariance of the extended vector $\bar{z}$. Problem \eqref{eq: problem1} is equivalent to
\begin{align*}
    \numberthis \label{eq: quadratic cost variafnce}
    \min_{\footnotesize \bar{V}(0:T) \succeq 0}\quad &  tr(C_{x}^\top Q_{T} C_{x} \bar{V}(T)) + \sum_{k=0}^{T-1} tr({C^\top Q C \bar{V}(k)})  \\
    \text{s.t.} \quad &   C_{x} \bar{V}(0) C_{x}^\top =\Sigma_x  \\
    & C_{x} \bar{V}(k+1) C_{x}^\top = \begin{bmatrix} A & B  \end{bmatrix}  C \bar{V}(k) C^\top \begin{bmatrix} A & B  \end{bmatrix}^\top + \Sigma_w \\
    & tr(C^\top W_i C \bar{V}(k)) \le p^{i}_k  
\end{align*}
\end{proposition}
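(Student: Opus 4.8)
The plan is to exhibit the passage from \eqref{eq: problem2} to \eqref{eq: quadratic cost variafnce} as an exact change of decision variable $V(k)\mapsto\bar V(k)$, so that, combined with the already-established equivalence \eqref{eq: problem1}$\Leftrightarrow$\eqref{eq: problem2}, the claim follows. First I would invoke Lemma~\ref{lemma: partially nested}: since the information structure \eqref{eq:information set} is partially nested, the classical partially nested LQG theory guarantees that an optimal admissible policy \eqref{eq:gama} may be taken affine in the decomposed, pairwise independent state components $\hat x,\omega_1,\omega_2$, hence of the form \eqref{eq:input decomposition}. The relevant second-order statistic is then $\bar V(k)={\rm E}[\bar z(k)\bar z(k)^\top]$, which is block diagonal as in \eqref{eq: augmented state variance} thanks to the pairwise independence of the blocks and the zero mean of the Gaussian noise, and the aggregated covariance of \eqref{eq: problem2} is recovered as $V(k)=C\,\bar V(k)\,C^\top$ --- this is \eqref{eq: variance xx} and \eqref{eq: variance uu} read off the full vector $z(k)=C\bar z(k)$ of \eqref{eq: extraction of state}.

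Next I would substitute $V(k)=C\bar V(k)C^\top$ into every term of \eqref{eq: problem2} and use the cyclic invariance of the trace. The running cost becomes $tr(QV(k))=tr(QC\bar V(k)C^\top)=tr(C^\top Q C\,\bar V(k))$; the terminal cost becomes $tr(Q_T V_{xx}(T))=tr(Q_T C_x\bar V(T)C_x^\top)=tr(C_x^\top Q_T C_x\,\bar V(T))$ via \eqref{eq: variance xx}; and the power constraint becomes $tr(W_i V(k))=tr(C^\top W_i C\,\bar V(k))\le p_k^i$. The dynamic constraint $[A\;B]\,V(k)\,[A\;B]^\top+\Sigma_w=F\,V(k+1)\,F^\top$ is precisely \eqref{eq: variance deco} once $F\,V(k+1)\,F^\top=V_{xx}(k+1)=C_x\bar V(k+1)C_x^\top$ is inserted, and the initial condition $F\,V(0)\,F^\top=\Sigma_x$ is \eqref{eq: variance xx initial condition}. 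Finally $\bar V(k)\succeq 0$ implies $V(k)=C\bar V(k)C^\top\succeq 0$, so the semidefiniteness requirement is carried along. This shows that \eqref{eq: quadratic cost variafnce} is a faithful rewriting of \eqref{eq: problem2}, hence of \eqref{eq: problem1}.

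The step I expect to be the main obstacle is the converse inclusion of feasible sets: one must argue that every $\bar V(0:T)\succeq 0$ satisfying the constraints of \eqref{eq: quadratic cost variafnce} is actually realized by some admissible control law of the form \eqref{eq:input decomposition} that respects \eqref{eq:gama}, i.e.\ that the block-diagonal parametrization \eqref{eq: augmented state variance} together with the listed linear constraints does not over-enlarge the feasible set relative to what controllers honouring \eqref{eq:information set} can produce. Necessity --- that restricting to policies of the form \eqref{eq:input decomposition} loses no optimality --- is what partial nestedness buys us; sufficiency --- recovering the gains $L_0, L_1, L_2$ that reproduce a prescribed $\bar V$ --- is exactly the structural characterization developed in Section~\ref{sec:dual}, to which I would defer. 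With both directions in hand, \eqref{eq: quadratic cost variafnce} and \eqref{eq: problem1} share the same optimal value and corresponding optimizers, which is the assertion of the proposition.
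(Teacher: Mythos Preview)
Your proposal is correct and follows essentially the same approach as the paper's own proof, which simply cites \eqref{eq: variance deco} and \eqref{eq: variance xx initial condition} together with the passage from \eqref{eq: problem1} to \eqref{eq: problem2}; you have merely unpacked the substitution $V(k)=C\bar V(k)C^\top$ term by term and invoked cyclicity of the trace explicitly. Your discussion of the converse inclusion (realizability of a feasible $\bar V$ by an admissible policy) is more careful than the paper, which glosses over this point entirely, but your resolution---deferring to partial nestedness for necessity and to the structural results of Section~\ref{sec:dual} for sufficiency---is exactly what the paper implicitly relies on.
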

\begin{proof}
The proof follows from problem in \eqref{eq: problem1} and equations \eqref{eq: variance deco} and \eqref{eq: variance xx initial condition}.
\end{proof}
\begin{remark}
\label{rem}
 Although the methodology is presented for the case of 2-player system, it can be extended to a system of $N$ players using  an algorithmic approach for state decomposition \cite{lamperski2015}. 
\end{remark}
\section{Information-oriented Optimization via Dual Decomposition}
\label{sec:dual}
In this section we proceed to define the dual problem to \eqref{eq: quadratic cost variafnce}, which allows to transform the original constrained problem \eqref{eq: problem1} into an unconstrained one. To this end, we introduce dual variables $S(k) \in \mathbb{R}^{n \times n}, k=0,\ldots,T,$ to account for constraints on the evolution of $\bar{V} (k)$, as defined in \eqref{eq: variance deco} and \eqref{eq: variance xx initial condition} . Additionally, dual scalar variables $\tau_i (k) \in \mathbb{R}^+\, , {k=0,\ldots,T-1},$ are defined to account for power constraints in the overall system. 
\subsection{Computation of Dual Variables}
Introducing the Langrange multipliers $S(0), \ldots, S(T)$ and $\tau_i (0), \ldots, \tau_i (T-1)$ the primal problem \eqref{eq: quadratic cost variafnce} is equivalent to 
\begin{align*}
    \max_{\footnotesize S(0:T), \tau_i (0:T-1)} & \min_{\footnotesize \bar{V}(0:T)}\quad   tr \left (S(0)(\Sigma_x - C_x \bar{V}(0) C_{x}^\top) \right)  \\
    & + tr(C_{x}^\top Q_{T} C_{x} \bar{V}(T)) + \sum_{k=0}^{T-1} tr({Q C \bar{V}(k) C^\top})\\
    & +\sum_{k=0}^{T-1} tr \left( S(k+1) \begin{bmatrix} A & B  \end{bmatrix}  C \bar{V}(k) C^\top \begin{bmatrix} A & B  \end{bmatrix}^\top \right)  \\
     & +\sum_{k=0}^{T-1} tr \left( S(k+1) (\Sigma_w - C_x \bar{V}(k+1) C_x^\top )  \right) \\
    & +\sum_{k=0}^{T-1} \sum_{i=1}^{M} tr \left( \tau_i(k) (C^\top W_i C \bar{V}(k) - p_k^i \right)
    \numberthis\label{eq: quadratic cost variance}
\end{align*}
where the constraints in \eqref{eq: quadratic cost variafnce} now appear as part of the cost in form of linear operators on covariance matrix $\bar{V}(k)$. Defining the Hamiltonian of the system
\begin{align*}
    &H(T) = tr \{ C_x^T \left( Q_{T} - S(T) \right)C_x \bar{V}(T)\}\\
    &H(k) = tr \{C^\top (Q + \begin{bmatrix} A & B  \end{bmatrix}^\top  S(k+1)  \begin{bmatrix} A & B  \end{bmatrix} + \\
    & - \begin{bmatrix} S(k) & 0 \\ 0 & 0 \end{bmatrix} + \sum_{i=1}^{M} \tau_i(k) W_i)C \bar{V}(k)\}\quad \text{for}\ k=0,\ldots,T-1
\end{align*}
the dual problem in \eqref{eq: quadratic cost variance} is
rewritten as 
\begin{align*}
    \max_{\footnotesize S(0:T), \tau_i (0:T-1)} \min_{\footnotesize \bar{V}(0:T)}\quad & H(T) + \sum_{k=0}^{T-1} \{ H(k) + \Sigma_w  \, tr S(k+1)\}+ \\
    & + \Sigma_x \, tr S(0) -\sum_{k=0}^{T-1} \sum_{i=1}^{M} \tau_i(k) p_k^i.
    \label{eq: quadiance}
\end{align*}
With the boundary condition on the Hamiltonian it follows $H(T)=0$, hence $S(T)=Q_{T}$. 
The dual function is finite if and only if  
\begin{equation*}
\numberthis\label{eq:feasabil}
    Q + \begin{bmatrix} A & B  \end{bmatrix}^\top  S(k+1)  \begin{bmatrix} A & B  \end{bmatrix} -\begin{bmatrix} S(k) & 0 \\ 0 & 0 \end{bmatrix} + \sum_{i=1}^{M} \tau_i(k) W_i \succeq 0.
\end{equation*}
Since the primal problem \eqref{eq: quadratic cost variafnce} is convex and constraints are affine, Slater's condition can be relaxed. Indeed, the constraints in \eqref{eq: quadratic cost variafnce}  are composed of linear equalities and inequalities and domain of the defined cost function is open, the Slater's condition reduces to feasibility. To this end, it is easy to verify that the set of constraints in \eqref{eq: quadratic cost variafnce} defines a non-empty region. Hence, the dual problem is equivalent to the primal and is stated as
\begin{align*}
    \numberthis\label{eq:eriafnce}
    \max_{\footnotesize S(0:T), \tau_i (0:T-1)}  \quad & tr(S(0)) \Sigma_x  + \Sigma_w \sum_{k=1}^{T} trS(k) - \sum_{k=0}^{T-1}\sum_{i=1}^{M} \tau_i(k) p^i_k   \\
    \text{s.t.} \quad &   Q(k)\\
    & + \begin{bmatrix} A^\top S(k+1) A - S(k) & A^\top S(k+1) B \\ B^\top S(k+1) A & B^\top S(k+1) B \end{bmatrix} \succeq 0\\
    & S(T+1) = 0
\end{align*}
where the constraint in \eqref{eq:eriafnce} is obtained from \eqref{eq:feasabil} by defining
\begin{align}
Q (k) = \left \lbrace
\begin{aligned}
    &Q + \sum_{i=1}^{M} \tau_i (k) W_i,  \quad & k=1,\ldots,T-1\\
    &\begin{bmatrix} Q_T & 0 \\ 0 & 0 \end{bmatrix},   & k=T.
\end{aligned}
\right.
\end{align}
With fixed values of $\tau_i$, the previous equation is maximized for every time-instant $k$ with
\begin{align*}
    \numberthis\label{eq:recursive}
    S(k) &= A^\top S(k+1) A + Q_{xx}(k) - L(k)^\top Y(k) L(k) \\
    Y(k) &= (B^\top S(k+1) B + Q_{uu}(k)) \\
    L(k) &= Y(k)^{-1} (B^\top S(k+1)A + Q_{xu}^\top (k))
\end{align*}
which can be proved by analogously to \cite{Gattami2010}. Indeed, the choice of $S(k)$ should be made such that $tr S(k)$ is maximized and at the same time constraint in \eqref{eq:feasabil} is satisfied, under the condition that the optimal value of $S(k+1)$ is known. To this end, since any choice of $S(k)$ with trace greater than the trace of \eqref{eq:recursive} violates the constraint in \eqref{eq:feasabil}, the choice in \eqref{eq:recursive} is optimal.
The variables $\tau_i$ have to be computed numerically from \eqref{eq:eriafnce} accounting for \eqref{eq:recursive}.
\subsection{Optimal Information-constrained Control}
In this subsection we show how to obtain the solution via information decomposition. In paragraph \ref{par: covariance decompostion} we introduced state, input and covariance decomposition. In the 2-player's case, we obtain three information sets: $\mathcal{I}_0,\mathcal{I}_1,\mathcal{I}_2$, that are defined by  \eqref{eq:information set}, \eqref{eq:common information set} and referred herein as the coordinator, first subsystem and second subsystem respectively. Moreover, the coordinator is assumed to have the following information about the overall system
\begin{align*}
    \left(A_0, B_0, Q^0, x^0 (k) \right) \triangleq \left(A, B, Q, \hat{x}(k) \right).
    \end{align*}
Before stating the main result of this paper, we define the expression for $J^l, \ l=0,1,2$ as
\begin{align*}
\numberthis\label{eq: expression of J}
    {J}^l ({V}^l,S,\tau) &= tr \left( Q_{T} {F}_l^\top {V}^l(T) {F}_l\right) + \sum_{k=0}^{T-1} tr\left(Q^l  {V}^l (k)\right)  \\ 
    &+ tr\left\lbrace S(k+1) \left(\begin{bmatrix} A_l \vert B_l  \end{bmatrix}  {V}^l(k) \begin{bmatrix} A_l \vert B_l  \end{bmatrix} ^\top \right) \right\rbrace\\
     &- tr\left\lbrace S(k+1) \left( {F}_l^\top {V}^l (k+1) {F} + \frac{\Sigma_w }{3}\right) \right\rbrace\\
    &+tr \left\lbrace S(0) \left({F}_l^\top {V}^l(0) {F}_l - \frac{\Sigma_x }{3}\right)  \right\rbrace \\
    &+ \sum_{k=0}^{T-1} \sum_{i=1}^{M} tr \left( \tau_i(k) W_i  {V}^l (k) - q_k^i \right)
    \end{align*}
where ${F}_0$, $F_1$ and $F_2$ are such that
\begin{align*}
  \numberthis\label{eq: expression of F}
     {F}_0 {V}^0 (k) {F_0}^\top &=  V_{\hat{x}\hat{x}}(k),\\
     {F}_1 {V}^1 (k) {F_1}^\top &=  \begin{bmatrix} V_{\omega_1\omega_1}(k) & 0\\ 0 & 0  \end{bmatrix},\\
     {F}_2 {V}^2 (k) {F_2}^\top &=\begin{bmatrix} 0 & 0\\ 0 & V_{\omega_2 \omega_2}(k)  \end{bmatrix}.
 \end{align*}
Moreover, the definition of $q_k ^i$ is given by identity: ${p_k ^i = 3 q_k ^i}$. 
We can now state the main result of this paper.
\begin{theorem}[Information-constrained optimal control]
\label{theorem}
Let the system dynamics be given by equation \eqref{eq: global system}. Considering the optimization problem defined in \eqref{eq: problem1} and denoting by $S(k)$ and $\tau_{i} (k)$ the optimal values of the dual variables introduced in \eqref{eq: quadratic cost variance} 
we state the following.
\begin{enumerate}[$i$.]
    \item The problem \eqref{eq: problem1} is
    decoupled into the sum of independent sub-problems that are linear in the respective decision variables, i.e., it is equivalent to 
        \begin{align*}
            \numberthis\label{eq: variance hhb}
            & \sum_{l = 0}^{2}\quad \min_{\footnotesize {V}^l (0:T)} {J}^l ({V}^l (0:T),S (0:T),\tau_{1:M} (0:T-1))
        \end{align*}
        where ${J}^l$ is defined in \eqref{eq: expression of J} and $V^l$, $l=0,1,2$ are defined in \eqref{eq: augmented state variance}.
    \item The optimal covariances $V^l$, $l=0,1,2$ of \eqref{eq: variance hhb} are computed according to
        \begin{align*}
            \numberthis\label{eq: optimal Vs}
            &V^l(k) = \begin{bmatrix}
                V^l_{xx}(k) & V^l_{xu}(k) \\
                V^l_{ux}(k) & V^l_{uu}(k)
            \end{bmatrix},\\
            &{V}^l_{xx} (0) =  \frac{\Sigma_x}{3}, \\
            &{V}^l_{ux} (k) = - {L}_l (k) {V}^l_{xx} (k), \\    
            &{V}^l_{uu} (k) =  {V}^l_{ux} (k) \left({V}^l_{xx}(k)\right)^{-1}  {V}^l_{xu} (k), \\
            &{V}^l_{xx} (k+1) = \begin{bmatrix} {A}_l & {B}_l \end{bmatrix} {V}^l (k) \begin{bmatrix} {A}_l & {B}_l \end{bmatrix}^\top + \Sigma_w .
        \end{align*} 
        where ${L}_l (k)$ is 
        \begin{align*}
        \numberthis \label{eq:Ll}
            {L}_l (k) = \left({B}_l^\top S(k+1) {B}_l + {Q}^l_{uu}\right)^{-1}\left ({A}_l^\top S(k+1) {B}_l  + {Q}^l_{xu}\right)^\top.
        \end{align*} 
    \end{enumerate}
\end{theorem}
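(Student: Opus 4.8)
The plan is to work entirely on the dual (Lagrangian) side and to exploit that an optimal aggregated covariance can be taken block–diagonal. For claim $(i)$, I would start from the min–max reformulation \eqref{eq: quadratic cost variance}, which is equivalent to the primal \eqref{eq: quadratic cost variafnce} — and hence, by Proposition~\ref{eq:proposition}, to the original problem \eqref{eq: problem1} — because strong duality holds (the cost is convex, the constraints are affine and feasible, as argued around \eqref{eq:feasabil}). In this min–max problem every term is either an exogenous constant ($tr(S(0)\Sigma_x)$, $\Sigma_w\,tr\,S(k+1)$, $\tau_i(k)p^i_k$) or a linear functional $tr(\Theta_k\bar V(k))$ of the aggregated covariance, with $\Theta_k$ assembled from $Q,Q_T,A,B,S(\cdot),\tau_i(\cdot),W_i$ and the extraction map $C$. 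By the partial nestedness of Lemma~\ref{lemma: partially nested} and the linearity of the state/input decomposition \eqref{eq:estimator}–\eqref{eq:input decomposition}, an optimal $\bar V(k)$ is of the block–diagonal form \eqref{eq: augmented state variance}; substituting this form, each $tr(\Theta_k\bar V(k))$ splits as $\sum_{l=0}^{2} tr(\Theta_k^l V^l(k))$, where $\Theta_k^l$ is the block–row/column of $C^\top\Theta_k C$ acting on block $l$ — this is where the partitions $A=[A_1|A_2]$, $B=[B_1|B_2]$, $Q^l=[Q_{x_l}|Q_{u_l}]$ and the projectors $F_l$ of \eqref{eq: expression of F} enter. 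Distributing the three exogenous constants in equal shares among the three sub–indices (the origin of $\Sigma_x/3$, $\Sigma_w/3$ and of $q^i_k=p^i_k/3$) and collecting the terms carrying a common index $l$ reproduces exactly $J^l(V^l,S,\tau)$ of \eqref{eq: expression of J}. Since after dualization the blocks $V^0,V^1,V^2$ enter only the objective and no remaining coupling constraint, the inner minimization separates, which is \eqref{eq: variance hhb}.

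For claim $(ii)$, I would analyze a single sub–problem $\min_{V^l}J^l$ at the optimal dual pair $(S,\tau)$. Each such problem is the minimization of a linear functional of $V^l$ over the covariances realizable by a deterministic linear feedback of the $l$-th state component: $V^l\succeq0$; the structural rank relation forcing $V^l_{uu}(k)=V^l_{ux}(k)(V^l_{xx}(k))^{-1}V^l_{xu}(k)$; the appropriate share of the initial condition fixing $V^l_{xx}(0)$; and the propagation $V^l_{xx}(k+1)=[A_l|B_l]V^l(k)[A_l|B_l]^\top+\Sigma_w$ inherited from \eqref{eq: variance deco} restricted to block $l$. Introducing the feedback gain through $V^l_{ux}(k)=-L_l(k)V^l_{xx}(k)$, substituting into $J^l$, and telescoping the $S(k+1)$-terms against this propagation and the boundary value $S(T)=Q_T$ — exactly as in \cite{Gattami2010} — collapses $J^l$, up to constants, to a sum of quadratic forms in $L_l(k)$ weighted by $V^l_{xx}(k)\succeq0$, whose $(u,u)$-block is $B_l^\top S(k+1)B_l+Q^l_{uu}$. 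Positive definiteness of $Q_{uu}$ makes each quadratic strictly convex, minimized uniquely at $L_l(k)=(B_l^\top S(k+1)B_l+Q^l_{uu})^{-1}(A_l^\top S(k+1)B_l+Q^l_{xu})^\top$, i.e. \eqref{eq:Ll}; plugging this back and propagating $V^l_{xx}$ by its recursion gives the remaining identities of \eqref{eq: optimal Vs}, where $L_l$ plays for the $l$-th sub–problem the role that the gain $L$ of \eqref{eq:recursive} plays for the joint one.

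The step I expect to be the real obstacle is establishing that the whole decomposition is \emph{lossless} in both directions: that confining the primal variable to the block–diagonal form \eqref{eq: augmented state variance} does not raise the optimal value — which needs Lemma~\ref{lemma: partially nested} together with the structural claim \eqref{eq:input decomposition} that the optimal control is affine in the mutually independent innovations $(\hat x,\omega_1,\omega_2)$ — and, conversely, that the minimizers of the decoupled sub–problems reassemble into a $\bar V$ feasible, hence optimal, for the joint dual. The latter rests on verifying the exact identities $tr(C^\top QC\,\bar V(k))=\sum_l tr(Q^l V^l(k))$, $C_x\bar V(k)C_x^\top=\sum_l F_l^\top V^l_{xx}(k)F_l$, and their analogues for the dynamics and power–constraint operators, with the stated $C,C_x,C_u,F_l$: routine linear algebra, but exactly where the dimension and sign bookkeeping must be carried out carefully. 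Everything else — strong duality, the $S$-recursion \eqref{eq:recursive}, and the completion of squares — is either already in the excerpt or a verbatim adaptation of \cite{Gattami2010}, and the passage from $N=2$ to arbitrary $N$ follows Remark~\ref{rem} via the algorithmic state decomposition of \cite{lamperski2015}.
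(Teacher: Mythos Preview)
Your argument for part $(i)$ is essentially the paper's: start from the Lagrangian \eqref{eq: quadratic cost variance}, use the block--diagonal form \eqref{eq: augmented state variance} of $\bar V$ justified by Lemma~\ref{lemma: partially nested} and \eqref{eq:input decomposition}, and split every trace term through the partitions of $C$, $A$, $B$, $Q$; the equal split of the exogenous constants into thirds is exactly how $\Sigma_x/3$, $\Sigma_w/3$ and $q^i_k$ arise in \eqref{eq: expression of J}. The linear--algebra identities you flag as the ``real obstacle'' are precisely the three displayed identities the paper verifies in its proof of $(i)$.

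For part $(ii)$ you take a genuinely different route. The paper does \emph{not} telescope or complete squares in $L_l$. Instead it observes that, at the optimal duals $(S,\tau)$, each sub--problem collapses to $\min_{V^l\succeq 0}\sum_k tr\big(Z^l(k)V^l(k)\big)$ with
\[
Z^l(k)=\begin{bmatrix} X_lY_l^{-1}X_l^\top & X_l\\ X_l^\top & Y_l\end{bmatrix},\qquad
X_l=A_l^\top S(k+1)B_l+Q^l_{xu},\quad Y_l=B_l^\top S(k+1)B_l+Q^l_{uu}.
\]
Since $Z^l(k)\succeq 0$ (its Schur complement vanishes) and $V^l(k)\succeq 0$, each summand is nonnegative; optimality is therefore equivalent to $tr(Z^l(k)V^l(k))=0$, and reading off the block--diagonal of $Z^lV^l$ yields directly $V^l_{ux}=-Y_l^{-1}X_l^\top V^l_{xx}=-L_lV^l_{xx}$ and $V^l_{uu}=V^l_{ux}(V^l_{xx})^{-1}V^l_{xu}$. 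The dynamics recursion and the initial--condition share are then recovered as primal feasibility at the saddle point, not imposed a priori. Your completion--of--squares argument is correct and more familiar from classical LQR, but it re-derives the Riccati structure that the paper has already encoded in the recursion \eqref{eq:recursive} for $S(k)$; the paper's trace--annihilation argument is shorter, stays entirely inside the SDP/duality framework, and makes transparent why the linear sub--problems are bounded below (namely $Z^l\succeq 0$) --- a point your sketch leaves implicit.
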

\begin{proof}[of $i.$]
From Proposition~\eqref{eq: quadratic cost variafnce}, problem~\eqref{eq: problem1}  and \eqref{eq: quadratic cost variafnce} are equivalent. Furthermore, from  equations \eqref{eq:noise1}, \eqref{eq: augmented state variance}  and \eqref{eq: quadratic cost variance} accounting for the specific structure of matrix $C_x$ one gets
\begin{align*}
 C_x \bar{V} (k) C_x ^\top &= V_{xx} (k) = V_{\hat{x}\hat{x}}(k) + \begin{bmatrix} V_{\omega_1\omega_1}(k) & 0\\ 0 & V_{\omega_2 \omega_2}(k)  \end{bmatrix}\\
   &= {F}_0 {V}^0 (k) {F_0}^\top + F_1 V^1 (k) F_1^\top + F_2 V^2 (k) F_2^\top 
\end{align*}
where ${F_0}$, $F_1$ and $F_2$ are extraction matrices since $V_{\hat{x}\hat{x}}(k)$,  $V_{\omega_1\omega_1}(k)$ and $V_{\omega_2 \omega_2}(k)$ are square submatrices of ${V}^0 (k), V^1 (k)$ and $V^2 (k)$ respectively. On the other hand, from the block-diagonal structure of $\bar{V}(k)$ and sparsity of  $C$, one obtains
\begin{align*}
     tr({Q C \bar{V}(k) C^\top})  &= tr \left( Q {V}^0(k)\right) + tr \left( Q^1 {V}^1(k)\right) + tr \left( Q^2 {V}^2(k)\right) 
\end{align*}
Analogously, we obtain
\begin{align*}
        \begin{bmatrix} A & B  \end{bmatrix}  C \bar{V} (k) C^\top \begin{bmatrix} A & B  \end{bmatrix}^\top &= 
    \begin{bmatrix} A & B  \end{bmatrix} {V}^0 (k) \begin{bmatrix} A & B  \end{bmatrix} ^\top\\ 
    & +  \begin{bmatrix} A_1 & B_1  \end{bmatrix} {V}^1 (k) \begin{bmatrix} A_1 & B_1  \end{bmatrix}^\top \\
    & + \begin{bmatrix} A_2 & B_2  \end{bmatrix} {V}^2 (k) \begin{bmatrix} A_2 & B_2  \end{bmatrix}^\top
\end{align*}
With algebraic reordering the proof of the first part is completed.
\end{proof}
\begin{proof}[of $ii.$]
The second and fifth equation of \eqref{eq: optimal Vs} stated follow respectively from the condition on the variance of the initial state and equation \eqref{eq: variance deco}. To prove the second and third equation, observe that the decoupled problems in \eqref{eq: variance hhb}  have a similar structure. Therefore, with the optimal values of $S(k)$ and $\tau_i (k)$, each problem in equation \eqref{eq: variance hhb} is written as
\begin{align*}
    \min_{\footnotesize {V}^l (0:T-1) \succeq 0}  \sum_{k=0}^{T-1} tr({Z}^l (k)  {V}^l (k)) +  \sum_{k=0}^{T} tr(S(k))
    - \sum_{k=0}^{T-1} \sum_{i=1}^{M} \tau_i (k) q^{i}_k 
\end{align*}
where ${Z}^l (k) $ is given by
\begin{align*}
  {Z}^l (k) =  \begin{bmatrix} {X}_l {Y_l}^{-1} {X_l}^\top & {X_l} \\ {X_l}^\top & {Y_l} \end{bmatrix}
\end{align*}
and the values of matrices $ {X}_l $ and $ {Y}_l $ are computed recursively
\begin{align*}
   & {X_l} = {A_l}^\top S(k+1) {B_l}  + Q^l_{xu} \\
   & {Y_l} = {B_l}^\top S(k+1) {B_l} + Q^l_{uu}.
\end{align*}
To conclude the proof, exploiting the linearity of the subproblems, in order to compute the optimal covariances $V_l$ it is sufficient to verify if the condition $ tr(Z^l (k) V^l (k)) = 0$ is satisfied for a certain choice of the covariance matrix $V_l$. Indeed
\begin{align*}
tr(Z^l (k) V^l (k))&=tr \begin{bmatrix} {X}_l {Y_l}^{-1} {X_l}^\top V_{xx} ^l + X_l V_{ux} ^l &  * \\ * & X_l ^ \top V_{xu} ^l + Y_l V_{uu} ^l  \end{bmatrix} 
\end{align*}
By imposing to the diagonal elements in latter equation to be zero and recalling the assumption on positive-definitness (and thus invertibility) of $Q_{uu} ^l$ it follows:
\begin{align*}
 V_{ux} ^l &= - Y_l ^{-1} X_l ^\top V_{xx} ^l = - L_l V_{xx} ^l  \\
 V_{uu} ^l &= - Y_l ^{-1} X_l ^T V_{xu} ^l =  V_{ux} ^l (V_{xx} ^l)^{-1} V_{xu} ^l
\end{align*} 
which concludes the proof.
\end{proof}
\begin{corollary}
 Consider the system \eqref{eq: main NCS} and the optimization problem defined in \eqref{eq: problem1}. For the 2-player system, the optimal  control law is given by
    \begin{align*}
        \numberthis\label{eq: optimal control law}
        &u(k) = u^0 (k)+\begin{bmatrix} u^1 (k) \\  u^2 (k)  \end{bmatrix} 
    \end{align*}
where  ${u}^l (k), l=0,1,2$ is computed as
        \begin{align*}
    u^l (k) = - L_l (k) x^{l} (k)
        \end{align*} 
and $L_l$ is defined by \eqref{eq:Ll}.
\end{corollary}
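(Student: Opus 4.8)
The plan is to obtain \eqref{eq: optimal control law} by combining the input decomposition \eqref{eq:input decomposition} with the optimal covariance relations \eqref{eq: optimal Vs} of Theorem~\ref{theorem}, and then to use the elementary fact that a zero-mean random vector with vanishing covariance is almost surely zero.

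First I would recall that, by Lemma~\ref{lemma: partially nested} together with the linearity of the state and input decompositions \eqref{eq:estimator}--\eqref{eq:input decomposition}, the optimal control of the two-player system takes the outer form $u(k) = u^0(k) + [\,u^1(k)^\top\;u^2(k)^\top\,]^\top$ with $(x^0,x^1,x^2) = (\hat x,\omega_1,\omega_2)$ and $(u^0,u^1,u^2) = (\hat\phi,\phi_1,\phi_2)$. Thus the corollary reduces to showing, for each $l\in\{0,1,2\}$, that the optimal block obeys $u^l(k) = -L_l(k)\,x^l(k)$ almost surely with $L_l(k)$ as in \eqref{eq:Ll}. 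Theorem~\ref{theorem}, part~$i.$, already decouples the cost into the subproblems $\min_{V^l}J^l$, and part~$ii.$ tells us that the optimal $V^l(k)$ — whose blocks $V^l_{xx}(k),V^l_{xu}(k),V^l_{ux}(k),V^l_{uu}(k)$ are precisely the second moments of $x^l(k)$ and $u^l(k)$ — satisfies $V^l_{ux}(k) = -L_l(k)V^l_{xx}(k)$ and $V^l_{uu}(k) = V^l_{ux}(k)\,(V^l_{xx}(k))^{-1}V^l_{xu}(k)$.

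Next I would form the residual $e^l(k) := u^l(k) + L_l(k)\,x^l(k)$. It is zero-mean, because $x_i(0)$ and $w_i(k)$ are zero-mean and every quantity in sight is produced by linear maps. Expanding ${\rm E}[e^l(k)e^l(k)^\top] = V^l_{uu}(k) + L_l(k)V^l_{xu}(k) + V^l_{ux}(k)L_l(k)^\top + L_l(k)V^l_{xx}(k)L_l(k)^\top$ and inserting the two identities above (plus the transpose $V^l_{xu}(k) = -V^l_{xx}(k)L_l(k)^\top$), each of the four terms collapses to $\pm L_l(k)V^l_{xx}(k)L_l(k)^\top$ and the sum is zero. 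Hence $e^l(k)=0$ a.s., i.e.\ $u^l(k) = -L_l(k)\,x^l(k)$; recombining over $l=0,1,2$ yields \eqref{eq: optimal control law}. I would close by noting admissibility in the sense of \eqref{eq:gama}: $\hat x(k)$ is a function of the common information $\mathcal I_k^0$ and, by \eqref{eq:noise1}, $\omega_i(k)$ is recoverable by player $i$ from its own state, so $u_i(k)$ depends only on $\mathcal I_k^i$.

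The only delicate point — not a computation — is to be sure that the relations in \eqref{eq: optimal Vs} genuinely pin $u^l(k)$ down to a \emph{deterministic} linear image of $x^l(k)$ rather than merely constraining its correlation with $x^l(k)$. That is exactly what the Schur-complement identity $V^l_{uu} = V^l_{ux}(V^l_{xx})^{-1}V^l_{xu}$ encodes (``the conditional covariance of $u^l$ given $x^l$ vanishes''), and it relies on invertibility of $V^l_{xx}(k)$ — guaranteed by the non-degeneracy of $\Sigma_x$ and $\Sigma_w$ — and on $Q^l_{uu}\succ 0$, so that the gain $L_l(k)$ in \eqref{eq:Ll} is well defined; everything else is the routine algebra sketched above.
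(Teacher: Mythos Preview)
Your argument is correct and follows essentially the same route as the paper: both extract the control law from the optimal covariance relations \eqref{eq: optimal Vs} of Theorem~\ref{theorem}. The paper simply writes $u^l(k)=-V^l_{ux}(k)\bigl(V^l_{xx}(k)\bigr)^{-1}x^l(k)=-L_l(k)x^l(k)$ without further comment, whereas you make the same conclusion rigorous by showing that the residual $e^l(k)=u^l(k)+L_l(k)x^l(k)$ has zero second moment (equivalently, that the Schur-complement identity forces the conditional variance to vanish) and by verifying admissibility with respect to $\mathcal{I}_k^i$; this is a welcome clarification rather than a different method.
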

\begin{proof}
According to Proposition \ref{eq:proposition}, the problem defined in \eqref{eq: problem1} is equivalent to the covariance selection problem in \eqref{eq: quadratic cost variafnce}. Since the latter is decomposed in Theorem \ref{theorem} and optimal values of covariances are provided by \eqref{eq: variance hhb}, the optimal control law follows in straightforward manner. Indeed, the control inputs referring to the coordinator and two subsystems are given by
\begin{align*}
   \numberthis\label{eq:a123}
    & {u}_l (k) = -{V}^l_{ux} (k) {{V}^l_{xx}}^{-1} (k) {x}^{l} (k) = - L^l (k) x^l (k)
    \end{align*}
    where
    \begin{align*}
    &{L}_{0} (k) =  (B^\top S(k+1) B + Q_{uu}) ^{-1} (A^\top S(k+1) B  + Q_{xu})^\top\\
    &L_{1} (k) =(B_1^\top S(k+1) B_1 + Q^1 _{uu}) ^{-1} (A_1^\top S(k+1) B_1  + Q^1 _{xu})^\top\\ 
    &L_{2} (k) = (B_2^\top S(k+1) B_2 + Q^2 _{uu}) ^{-1} (A_2^\top S(k+1) B_2  + Q^2 _{xu})^\top.
\end{align*} 
\end{proof}

\subsection{Interpretation of  Control Input Structure}
Consider a 2-player network with one-step communication delay as depicted in Fig.~\ref{fig:2-player}. It can be transformed into an equivalent network by introducing a dummy node, herein referred to as coordinator ${\mathcal{C}}$ (this is illustrated in Fig.~\ref{fig:2-player with coordinator at time t}). The colocated control units of subsystems $\mathcal{S}_1$ and $\mathcal{S}_2$ are of limited computational power (e.g. they might be routers, switches etc.) and limited memory. The coordinating unit ${\mathcal{C}}$ is assumed to be able to perform more complex computations. However, as it can be noted, it also has access to limited information about the overall system - more precisely, it knows a one-step delayed information about both subsystems. 

In our approach ${\mathcal{C}}$ computes and stores the sequences of $S(0:T)$ and ${\tau_i (0:T-1)}$ offline, based on equations \eqref{eq:eriafnce} and \eqref{eq:recursive}. During the system execution, at time-instant $k$, the coordinator ${\mathcal{C}}$ sends the matrix $S(k+1)$ to the local units. Then, using equations \eqref{eq:a123}, the calculation of the gains $L_1(k)$ and $L_2(k)$ is computed locally at the control units of $\mathcal{S}_1$ and $\mathcal{S}_2$ by matrix multiplications, thus avoiding additional memory requirements. Moreover, the coordinator ${\mathcal{C}}$, computes the estimate of the overall state based on delayed knowledge, and passes the command to units $\mathcal{S}_1$ and $\mathcal{S}_2$. Hence, the corresponding inputs to be applied to the plants are computed using local measurements and the control signal from the coordinator. 
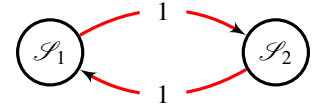
\begin{figure}[ht]
    \centering
    \begin{tikzpicture}
        \tikzset{vertex/.style = {shape=circle,draw,minimum size=1.5em, very thick}}
        \tikzset{edge/.style = {->,> = latex', draw=red,very thick}}
        \node[vertex] (a) at  (0,0) {$\mathcal{S}_1$};
        \node[vertex] (a1) at  (3,0) {$\mathcal{S}_2$};
        \begin{scope}[>={Stealth[black]},
            every node/.style={fill=white,circle},
            every edge/.style={draw=red,very thick}]
            \draw[edge] (a)  to[bend left] node {$1$} (a1) ;
            \draw[edge] (a1) to[bend left] node {$1$}  (a);
            \end{scope}
    \end{tikzpicture}
    \caption{2-player problem}    \label{fig:2-player}
\end{figure}
\begin{figure} [ht!]
    \centering
    \begin{tikzpicture}
        \tikzset{vertex/.style = {shape=circle,draw,minimum size=1.5em, very thick}}
        \tikzset{edge/.style = {->,> = latex', draw=red,very thick}}
        \node[vertex] (a) at  (0,0) {$\mathcal{S}_1$};
        \node[vertex] (a1) at  (3.7,0) {${\mathcal{C}}$};
        \node[vertex] (a2) at  (7.4,0) {$\mathcal{S}_2$};
        
        
        \node[below of = a1, node distance = 1cm] {$\begin{aligned}
            &\quad\ \ S(0:T)\\
            &\hat{u} (k) = - \hat{L} (k) \hat{x}(k)\\
        \end{aligned}$};
        
        \begin{scope}[>={Stealth[black]},
            every node/.style={fill=white,circle},
            every edge/.style={draw=red,very thick}]
            \draw[edge] (a)  to[bend left] node {$1$} (a1) ;
            \draw[edge] (a1) to[bend left] node (CC2) {$0$}  (a);
            \draw[edge] (a1)  to[bend right] node (CC1) {$0$}  (a2);
            \draw[edge] (a2) to[bend right] node {$1$}  (a1);
        \end{scope}
        
        \node[above of = CC2, node distance = 0.4cm] {$\begin{aligned}   S(k+1), \hat{u}(k) \end{aligned}$};
        \node[above of = CC1, node distance = 0.4cm] {$\begin{aligned}   S(k+1), \hat{u}(k) \end{aligned}$};
        
    \end{tikzpicture}
    \begin{align*}
        &\phi_1 (k) = - {L}_1 (k) {x}^1 (k) \quad & \quad \phi_2 (k) = - {L}_2 (k) {x}^2 (k)\\
        &u_1(k) = \phi_1 (k) + \begin{bmatrix}I \vert 0 \end{bmatrix}\hat{u} (k)  \quad & \quad u_2(k) = \phi_2 (k) + \begin{bmatrix}0 \vert I \end{bmatrix}\hat{u} (k) 
    \end{align*}
    \caption{Equivalent scheme at time instant $k$}\label{fig:2-player with coordinator at time t}
\end{figure}

\section{CONCLUSIONS}
\label{sec:conclude}
In this paper a framework for power-constrained optimization based on information decomposition is introduced. The linear quadratic control problem with power constraints is decomposed accordingly through covariance decomposition and Lagrangian dual reformulation. As presented, the obtained equivalent problem is linear in the new decision variables and the control gains are computed offline. The approach adopted can be extended to a network of several players.

\addtolength{\textheight}{-12cm}   







\bibliographystyle{IEEEtran}

\end{document}